\newcommand{\Tr}{\operatorname{Tr}}
\renewcommand{\Im}{\operatorname{Im}}
\newcommand{\diag}{\operatorname{diag}}
\newcommand{\TC}{\operatorname{TC}}
\newcommand{\RR}{\mathbb{R}}
\newcommand{\calD}{\mathcal{D}}
\newcommand{\jp}{{\vec{j}_\text{p}}}
\renewcommand{\vec}[1]{\mathbf{#1}}
\newcommand{\rmd}{\mathrm{d}}
\newtheorem{theorem}{Theorem}
\newtheorem{corollary}{Corollary}
\begin{document}

\title{Lower semi-continuity of universal functional in paramagnetic current-density functional theory}
\author{Simen Kvaal}
\email{simen.kvaal@kjemi.uio.no}
\author{Andre Laestadius}
\author{Erik Tellgren}
\author{Trygve Helgaker}
\affiliation{Hylleraas Centre for Quantum Molecular Sciences, Department of Chemistry, University of Oslo, P.O. Box 1033 Blindern, N-0315 Oslo, Norway 
}%
\date{Friday 3 July 2020}

\begin{abstract}
    A cornerstone of current-density functional theory (CDFT) in its paramagnetic formulation is proven. After a brief outline of the mathematical structure of CDFT, the lower semi-continuity and expectation valuedness of the CDFT constrained-search functional is proven, meaning that there is always a minimizing density matrix in the CDFT constrained-search universal density functional. These results place the mathematical framework of CDFT on the same footing as that of standard DFT.
\end{abstract}

\maketitle

\section{Introduction}

Density-functional theory (DFT) is at present the most widely used tool for first-principles electronic-structure calculations in solid-state physics and quantum chemistry. DFT was put on a solid mathematical ground by Lieb in a landmark paper~\cite{Lieb1983} from 1983, where he introduced the universal density functional $F(\rho)$ as the convex conjugate to the concave ground-state energy $E(v)$ for an electronic system in the external scalar potential $v$.

For electronic systems under the influence of a classical external magnetic field $\mathbf{A}$, current-density functional theory (CDFT) was introduced by Vignale and Rasolt in 1987~\cite{Vignale1987}. In addition to the density $\rho$, the paramagnetic current density $\jp$ becomes a basic variable. The mathematical foundation of CDFT was put in place by Tellgren \emph{et al.}\,\cite{Tellgren2012} and Laestadius~\cite{Laestadius2014,Laestadius2014b} in the 2010s based on Lieb's treatment of the field-free standard case. However, 
a central piece of the puzzle has been missing---namely, whether the CDFT constrained-search functional $F(\rho,\jp)$ is lower semi-continuous and expectation valued~\cite{Kvaal2015}, i.e., that the infimum in its definition [see Eq.~\eqref{eq:cs} below] is in fact attained.

In this letter, we provide proofs of these assertions. The CDFT constrained-search functional is indeed convex lower-semicontinuous, and can therefore be identified with the CDFT Lieb functional---that is, a Legendre--Fenchel transform of the energy. \emph{Without} this fact, the ground-state energy functional $E(v,\mathbf{A})$ and the constrained-search functional $F(\rho,\jp)$ contain \emph{different} information. If $F(\rho,\jp)$ were not expectation valued, one would lose the interpretation of the universal functional as intrinsic energy, which is very useful in standard DFT.

For an $N$-electron system in sufficiently regular external potentials $v$ and $\mathbf{A}$, the ground-state energy is given by the Rayleigh--Ritz variation principle as
\begin{equation}
    E(v,\mathbf{A}) = \inf_{\Gamma} \Tr (\Gamma H(v,\mathbf{A}) ) , \label{eq:erg}
\end{equation}
where $H(v,\mathbf{A}) = T(\mathbf{A}) + W + \sum_{i=1}^N v(\vec{r}_i)$ is the electronic Hamiltonian with kinetic-energy operator $T(\mathbf{A}) = \tfrac{1}{2}\sum_{i=1}^N [-\mathrm{i}\nabla_i + \mathbf{A}(\mathbf{r}_i)]^2$ and two-electron repulsion operator $W$. The minimization is over all $N$-electron density matrices $\Gamma$ of finite kinetic energy, for which the one-electron density is $\rho \in X_{\text{L}} = L^1(\mathbb{R}^3)\cap L^3(\mathbb{R}^3)$, and $\jp \in X_\text{p} = \mathbf{L}^1(\mathbb{R}^3)\cap \mathbf{L}^{3/2}(\mathbb{R}^3)$~\cite{LAESTADIUS_JCTC15_4003}. (The boldface notation indicates a space of vector fields.) The external potential energy $(v \vert \rho) = \int_{\mathbb R^3} \!  v(\mathbf r)  \rho (\mathbf r)\,\mathrm d \mathbf r$, the paramagnetic and diamagnetic  terms $\tfrac{1}{2}(|\mathbf{A}|^2 \vert \rho)$ and $(\mathbf{A}\vert \jp) = \int_{\mathbb{R}^3} \! \mathbf{A}(\mathbf{r})\cdot \jp(\mathbf{r}) \, \mathrm{d} r$, and thus the Hamiltonian $H(v,\mathbf{A})$, are well defined for any  $v \in X_{\text{L}}' = L^{3/2}(\mathbb{R}^3) + L^\infty(\mathbb{R}^3)$ and $\mathbf{A} \in X_\text{p}' = \mathbf{L}^3(\mathbb{R}^3) + \mathbf{L}^\infty(\mathbb{R}^3)$, where $X^\prime_\text L$ and $X^\prime_\text p$ are the dual spaces of $X_{\text{L}}$ and $X_\text{p}$, respectively.  Examples of such potentials are the nuclear Coulomb potentials and uniform magnetic fields inside bounded domains. The symbol $X_\text{L}$ for the space of densities is so chosen to indicate it is the density space of  Lieb's analysis, while $X_\text{p}$ indicates ``paramagnetic'' current densities.

By a well-known reformulation of Eq.~\eqref{eq:erg}, we obtain the CDFT Hohenberg--Kohn variation principle 
\begin{equation}
    \begin{split}
        E(v,\mathbf{A}) &= \inf_{(\rho,\jp) \in X_\text L \times X_\text{p}} \big\{ F(\rho,\jp) \\  &\quad\qquad +(v + \tfrac{1}{2}|\mathbf{A}|^2 \vert \rho)  + (\mathbf{A}\vert \jp) \big\}.  \label{HK_cs}
    \end{split}
\end{equation}
Here the Vignale--Rasolt constrained-search density functional $F\colon X_\text L \times X_\text{p}  \to [0,+\infty]$ is defined by
\begin{equation}
    F(\rho,\jp) =  \inf_{\Gamma\mapsto (\rho,\jp)} \Tr(\Gamma H_0 ), 
\label{eq:cs}
\end{equation}
where $H_0 = T(\mathbf{0}) + W$ is the intrinsic electronic Hamiltonian, and $\Gamma \mapsto (\rho,\jp)$ means that the infimum is taken over all $N$-electron density matrices $\Gamma$ with density-current pair $(\rho,\jp) \in L^1(\mathbb{R}^3)\times \mathbf{L}^1(\mathbb{R}^3)$. Thus, if $(\rho,\jp)$ is not $N$-representable, we have $F(\rho,\jp)=+\infty$. 
The universal density functional $F$ is the central quantity in any flavor of DFT, whose
mathematical properties and approximation is of utmost importance to the field.

Although $E$ in Eq.\,\eqref{HK_cs} is not concave, it is readily seen that the reparametrized energy
\begin{equation}
\tilde{E}(u,\vec{A}) = E(u - \tfrac{1}{2}\vert \vec{A} \vert^2,\vec{A})
\end{equation}
is concave.  This reparametrization relies on a technical notion of {\it compatibility} of function spaces for the scalar and vector potentials~\cite{LAESTADIUS_JCTC15_4003}, satisfied for the potentials we consider here.

From the concavity and upper semi-continuity of the modified ground-state energy $\tilde{E}$, one deduces the existence of an alternative 
universal density functional $\hat{F} \colon X_{\text{L}} \times X_\text{p} \to [0,+\infty]$ related to the ground-state energy by Legendre--Fenchel transformations in the manner
\begin{align}
    \tilde{E}(u,\mathbf{A}) &= \inf_{(\rho,\jp)} \left\{ \hat{F}(\rho,\jp) + ( u \vert \rho) + (\mathbf{A}\vert \jp)  \right\}, \label{HK_cj}\\
    \hat{F}(\rho,\jp) &= \sup_{(u,\mathbf{A})} \left\{ \tilde{E}(u,\mathbf{A}) - (u \vert \rho) - (\mathbf{A}\vert\jp) \right\},
\end{align}
where the optimizations are over the space $X_\text{L}\times X_\text{p}$ and its dual $X_{\text{L}}'\times X_\text{p}'$, respectively.
As a Legendre--Fenchel transform, the functional $\hat{F}$ is convex and lower semi-continuous. 
In this formulation of CDFT, the ground-state energy $\tilde E$ and the universal density functional $\hat{F}$ contain precisely
\emph{the same information}: each functional can be obtained from the other and therefore contains all information about ground-state electronic systems in external scalar and vector fields.

From a comparison of the Hohenberg--Kohn variation principles in Eqs.\,\eqref{HK_cs} and~\eqref{HK_cj}, 
it is tempting to conclude that $\hat{F} = F$ 
are the same functional, producing the same ground-state energy for each $(v,\mathbf{A})$. 
However, there exist infinitely many functionals $\tilde F \colon X_\text L \times X_\text{p} \to [0,+\infty]$ that
give the correct ground-state energy $E(v,\mathbf{A})$ (but not necessarily the same minimizing density, if any) for each $(v,\mathbf{A})$ in the Hohenberg--Kohn variation principle. Each such $\tilde F$ is 
said to be an \emph{admissible density functional}~\cite{Kvaal2015}. Among these, the functional $\hat{F}$ stands out as being the only lower semi-continuous and convex 
universal density functional and a lower bound to all other admissible density functionals, $\hat{F} \leq \tilde F$. The functional $\hat{F}$, often called the closed convex hull of all admissible density functionals, is thus the most well-behaved admissible
density functional. Indeed, we may view it as a regularization of all admissible density functionals, known as the $\Gamma$-regularization in convex analysis. (This name is unrelated to our notation of density matrices.) 

A fundamental result of Lieb's analysis of DFT is the identification of the transparent   constrained-search density functional with the mathematically well-behaved closed convex hull $\hat{F}$. The identification follows since $F$ is convex and lower semi-continuous.
Whereas convexity follows easily for the CDFT Vignale--Rasolt functional $F$, the proof of lower semi-continuity is nontrivial. For standard DFT it is given in Ref.\,\onlinecite{Lieb1983}, and for CDFT in the present letter.

We simplify our analysis by merely assuming that the  density--current pairs are $(\rho,\jp) \in L^1(\mathbb{R}^3)\times \mathbf{L}^1(\mathbb{R}^3) = [L^1(\mathbb{R}^3)]^4$, which we denote as $X$. With this topology, the potentials must be taken to be bounded functions, $(v,\mathbf{A})\in X' = L^\infty(\mathbb{R}^3)\times \mathbf{L}^\infty(\mathbb{R}^3) = [L^\infty(\mathbb{R}^3)]^4$. 
This simplification is irrelevant in this context: if $F$ can be shown to be lower semi-continuous in the $[L^1(\mathbb{R}^3)]^4$ topology, it will be lower semi-continuous in any stronger topology,
as required if we enlarge the potential space to include more singular functions such as those in $X_\text{L}'\times X_\text{p}'$. Indeed, the original proof of lower semi-continuity of the standard DFT Levy--Lieb functional~\eqref{eq:cs} was with respect to the $L^1(\mathbb{R}^3)$ topology, from which the same property with respect to the $X_\text{L}$ topology immediately follows.

\section{Theorem and proof}

The intrinsic Hamiltonian $H_0 = H(0,\mathbf 0)$ is self-adjoint ($H_0 = H_0^\dag$) over $L^2_N$, the Hilbert space of square-integrable $N$-electron wavefunctions (with spin and permutational antisymmetry built in). The expectation values of $H_0$ and $H(v,\mathbf{A})$ are well-defined on the Sobolev space $H^1_N$, the subset of $L^2_N$ with finite kinetic energy. 

We denote by  $\mathcal{D}_N$ the convex set of $N$-electron mixed states with finite kinetic energy.  We have the mathematical characterization~\cite{Tellgren2014}
\begin{equation}
    \begin{split}
        \mathcal{D}_N = \big\{ \Gamma \in \TC(L^2_N) \mid \Gamma^\dag = \Gamma \geq 0, \, \Tr \Gamma = 1 &, \\
        \nabla_1 \Gamma \nabla_1^\dag \in \TC(L^2_N) & \big\},
    \end{split}
\end{equation}
where $\TC(L^2_N)$ is the set of \emph{trace-class operators} over $L^2_N$, the largest set of operators to which a basis-independent trace can be assigned. 
An operator $A$ is trace class if and only if the positive square root $|A| := \sqrt{A^\dag A}$ is trace class~\cite{Reed1980}. A self-adjoint operator $A$ is trace-class if and only if it has a spectral decomposition of the form $A = \sum_{k=1}^\infty \lambda_k \ket{\phi_k}\bra{\phi_k}$,
where $\{\phi_k\}$ forms an orthonormal basis and where $\sum_k \lambda_k$ is absolutely convergent. Now $A = \Gamma \in \mathcal{D}_N$ if and only if $\lambda_k \geq 0$, $\sum_k \lambda_k = 1$, $\{\phi_k\} \subset H^1_N$, and if the total kinetic energy is finite, $\sum_k \lambda_k \braket{\phi_k|T|\phi_k} < +\infty$.

For any $\psi \in H^1_N$, the density--current pair $(\rho,\jp) \in L^1(\mathbb{R}^3)\times \mathbf{L}^1(\mathbb{R}^3)$ is defined by 
\begin{align}
    \rho(\mathbf{r}_1) &:=  N \int |\psi(\mathbf{r}_1;\tau_{-1})|^2 \, \rmd \tau_{-1}, \label{eq:rho-def}\\
    \jp(\mathbf{r}_1) &:= N \Im \int \psi^\ast(\mathbf{r}_1 ; \tau_{-1}) \nabla_1 \psi(\mathbf{r}_1 ; \tau_{-1}) \, \rmd \tau_{-1},
\end{align}
where we integrate over all spin variables and over $N-1$ spatial coordinates, $\tau_{-1} = (\sigma_1,x_2,\cdots,x_N)$. For $A=\Gamma \in \mathcal D_N$, we can, for instance, compute $\rho= \rho_\Gamma$ from $\sum_k \lambda_k \rho_k$ with $\rho_k$ obtained from Eq.\,\eqref{eq:rho-def} with $\psi=\phi_k$ (and similarly for $\jp$).

The theorem involves the \emph{weak topology} on $X  = L^1(\mathbb{R}^3)\times \mathbf{L}^1(\mathbb{R}^3)$. Weak convergence of a sequence $\{x_n\} \subset X$, written $x_n \rightharpoonup x \in X$, means that, for any bounded linear functional $\omega \in X'$,  we have $\omega(x_n) \to \omega(x)$ as a sequence of numbers---that is, weak convergence is the pointwise convergence of all bounded linear functionals. 
Recall that the dual space of $L^1(\mathbb{R}^3)$ is $L^\infty(\mathbb{R}^3)$, 
so that $\rho_n \rightharpoonup \rho \in L^1(\mathbb{R}^3)$ if and only if 
$( f\vert\rho_n ) \to ( f \vert \rho)$ for every $f \in L^\infty(\mathbb{R}^3)$.
Likewise, $(\rho_n,\jp_n) \rightharpoonup (\rho,\jp)\in X$ if and only if 
$( f \vert\rho_n) \to (f \vert \rho)$ and
$( \mathbf{a}  \vert \jp_n)   \to (  \mathbf{a} \vert \jp)$
for every $(f,\mathbf{a}) \in X^\prime$. 

The trace-class operators over a separable Hilbert space $\mathcal{H}$ are examples of \emph{compact operators}, an infinite dimensional generalization of finite-rank operators. Indeed, the set $K(\mathcal{H})$ of compact operators is the closure of the finite-rank operators in the norm topology and thus a Banach space. The dual space of $K(\mathcal{H})$ is in fact $\TC(\mathcal{H})$. For $B\in K(\mathcal{H})$ and $A \in \TC(\mathcal{H})$, the dual pairing is $ \Tr(BA)$. Similar to the weak topology for a Banach space, the dual of a Banach space  can be equipped with the \emph{weak-$*$ topology}. A sequence of trace-class operators $\{A_n\}$ converges weak-$*$ to $A \in \TC(\mathcal{H})$ if, for each $B \in K(\mathcal{H})$, $\Tr( B_n A) \to \Tr(BA)$.

We now state and prove our main result, from which lower semi-continuity follows in Corollary~\ref{cor}. The theorem is the CDFT analogue of Theorem~4.4 in Ref.\,\onlinecite{Lieb1983}.
\begin{theorem}\label{theorem:CDFT-wlsc-new}
  Suppose $(\rho,\jp) \in X$ and $\{(\rho_n,\jp_n)\}\subset X$ are such that  $F(\rho,\jp)<+\infty$ and $F(\rho_n,\jp_n)<+\infty$ for each $n \in \mathbb N$ and suppose further that $(\rho_n,\jp_n) \rightharpoonup (\rho,\jp)$. Then there exists $\Gamma\in\calD_N$ such that $\Gamma \mapsto (\rho,\jp)$ and $\Tr (H_0\Gamma) \leq \liminf_n F(\rho_n,\jp_n)$.
\end{theorem}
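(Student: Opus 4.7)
The plan is to extract a weak-$*$ convergent subsequence of near-minimizing density matrices in $\TC(L^2_N)$ and identify the limit's density--current pair with $(\rho,\jp)$. If $\liminf_n F(\rho_n,\jp_n) = +\infty$, then any $\Gamma \mapsto (\rho,\jp)$ with $\Tr(H_0\Gamma) < \infty$ suffices---one exists because $F(\rho,\jp) < \infty$---so I assume the liminf is a finite number $L$ and pass to a subsequence attaining it. For each $n$, I choose $\Gamma_n \in \calD_N$ with $\Gamma_n \mapsto (\rho_n,\jp_n)$ and $\Tr(H_0\Gamma_n) \leq F(\rho_n,\jp_n) + 1/n$; because $W \geq 0$, this produces a uniform kinetic bound $\Tr(T\Gamma_n) \leq L+1$. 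Since $\|\Gamma_n\|_{\TC} = 1$ and $\TC(L^2_N)$ is the dual of the separable Banach space $K(L^2_N)$, Banach--Alaoglu yields a weak-$*$ convergent subsequence $\Gamma_n \to \Gamma \in \TC(L^2_N)$. Pairing against rank-one compacts $|\psi\rangle\langle\psi|$ shows $\Gamma \geq 0$.

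The central step is to identify $\Gamma \mapsto (\rho,\jp)$ with $\Tr\Gamma = 1$. I would work with the spectral decompositions $\Gamma_n = \sum_k \lambda_{n,k}|\phi_{n,k}\rangle\langle\phi_{n,k}|$, in which the kinetic bound reads $\sum_k \lambda_{n,k}\|\nabla\phi_{n,k}\|^2 \leq L+1$. A diagonal extraction produces a further subsequence with $\lambda_{n,k} \to \lambda_k$ for each $k$; whenever $\lambda_k > 0$, the $\phi_{n,k}$ are eventually bounded in $H^1_N$, so by a further extraction $\phi_{n,k} \rightharpoonup \phi_k$ weakly in $H^1_N$ and strongly in $L^2_{\text{loc}}(\mathbb{R}^{3N})$ by Rellich--Kondrachov. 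Setting $\tilde\Gamma := \sum_k \lambda_k|\phi_k\rangle\langle\phi_k|$, for compactly supported $f \in L^\infty(\mathbb{R}^3)$ one passes to the limit in $(f|\rho_n) = \sum_k \lambda_{n,k}(f|\rho_{\phi_{n,k}})$ via Fatou/dominated convergence in the $k$-sum, combined with the local $L^2$ convergence of each mode, obtaining $(f|\rho_n) \to (f|\rho_{\tilde\Gamma})$. The weak $L^1$ hypothesis forces $(f|\rho) = (f|\rho_{\tilde\Gamma})$, hence $\rho_{\tilde\Gamma} = \rho$ almost everywhere. An analogous argument for compactly supported $\mathbf{a} \in \mathbf{L}^\infty$---now exploiting that $\mathbf{a}\phi_{n,k}^\ast \to \mathbf{a}\phi_k^\ast$ strongly in $L^2$ while $\nabla_1\phi_{n,k} \rightharpoonup \nabla_1\phi_k$ weakly in $L^2$, so their bilinear pairing passes to the limit---gives $\jp_{\tilde\Gamma} = \jp$. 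Testing the weak $L^1$ convergence against the constant $f \equiv 1 \in L^\infty(\mathbb{R}^3)$ gives $\int\rho = N$, forcing $\Tr\tilde\Gamma = 1$; uniqueness of the weak-$*$ limit, verified via matrix elements against rank-one compacts, identifies $\Gamma = \tilde\Gamma$, so $\Gamma \in \calD_N$ with the correct density and current.

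The energy inequality then follows from weak-$H^1$ lower semi-continuity of the quadratic form $\phi \mapsto \langle\phi|H_0|\phi\rangle$: the kinetic contribution is convex and continuous in the $H^1$ seminorm, while the non-negative Coulomb form $\langle\phi|W|\phi\rangle$ is handled by spectral truncation of $W$ together with monotone convergence, yielding $\liminf_n \langle\phi_{n,k}|H_0|\phi_{n,k}\rangle \geq \langle\phi_k|H_0|\phi_k\rangle$ for each $k$. Fatou's lemma in the sum over $k$ then gives $\Tr(H_0\Gamma) \leq \liminf_n \Tr(H_0\Gamma_n) \leq \liminf_n F(\rho_n,\jp_n)$. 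I expect the hardest step to be the density--current identification, because weak $L^1$ convergence is not automatically tight and spectral mass can leak both to spatial infinity and into arbitrarily high eigenmodes; ruling out both losses requires careful joint use of the normalization $\int\rho_n = N$ (preserved by the constant test function) and the uniform kinetic bound. This is especially delicate for $\jp$, whose defining gradient obstructs a purely local compactness argument and forces reliance on the weak--strong pairing structure described above.
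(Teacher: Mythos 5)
There is a genuine gap at the heart of your argument: the identification of the density--current pair of the limit. Your overall skeleton (Banach--Alaoglu, identify the limit's density, lower semi-continuity for the energy) is sound, but the step in which you pass from $(f\vert\rho_n)=\sum_k\lambda_{n,k}(f\vert\rho_{\phi_{n,k}})$ to $(f\vert\rho_{\tilde\Gamma})$ ``via Fatou/dominated convergence in the $k$-sum'' is not justified. Pointwise convergence $\lambda_{n,k}\to\lambda_k$ together with mode-wise local $L^2$ compactness does not allow the interchange of $n\to\infty$ with the sum over $k$: dominated convergence requires a summable dominating bound \emph{uniform in $n$}, which you do not have, and Fatou gives only a one-sided inequality. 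Spectral mass can spread over infinitely many modes --- e.g.\ $\lambda_{n,k}=1/n$ for $k\leq n$ gives $\lambda_k=0$ for every $k$ and hence $\tilde\Gamma=0$ even though $\Tr\Gamma_n=1$ --- and the uniform kinetic bound does not repair this mode by mode, since the kinetic energies $\|\nabla\phi_{n,k}\|^2$ need not grow with $k$, so there is no uniform tail estimate on $\sum_{k>K}\lambda_{n,k}$. The same interchange problem undermines your final identification $\Gamma=\tilde\Gamma$ via rank-one compacts. You correctly name this leakage as the hardest step, but the proposal does not actually close it.

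The paper's device for closing exactly this gap is to apply Banach--Alaoglu not to $\Gamma_n$ but to $y_n=h\Gamma_n h$ with $h=(T+W+1)^{1/2}$, whose trace norms are bounded by the energy bound $g+1$. Testing the weak-$*$ limit $y$ against $B=h^{-1}M\partial_{1\alpha}h^{-1}$, where $M$ is multiplication by $a_\alpha(\mathbf r_1)S(\tau)$ supported on a bounded box $\Omega^N$, reduces the density--current identification to the compactness of $B$, which follows from the Rellich--Kondrachov theorem (compact embedding of $H^1(\Omega^N)$ into $L^2(\Omega^N)$, i.e.\ compactness of $h^{-1}$ on the box); the contribution outside $\Omega^N$ is controlled by Cauchy--Schwarz against the kinetic bound together with the tightness of $\{\rho_n\}$ inherited from the weak $L^1$ convergence. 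This argument acts on the whole operator at once and never requires interchanging a limit with the spectral sum, which is precisely what your mode-by-mode diagonal extraction cannot deliver. To repair your proof you would either have to adopt this conjugation-by-$h$ strategy or supply an independent uniform-in-$n$ tail bound over the spectral index, which is not available from the hypotheses as you have used them.
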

\begin{proof}[Proof of Theorem 1]
    The initial setup follows Ref.\,\onlinecite{Lieb1983}, which we here restate. Without loss of generality, we may replace $H_0 = T + W$ by $h^2 =
  T+W + 1$, which is self-adjoint and positive definite. The operator $h$ is taken to be the unique positive self-adjoint square root of $T + W + 1$.
  
Consider the sequence $\{g_n\}$ with elements $g_n := F(\rho_n,\jp_n)$. If $g_n\to+\infty$, then the statement of the theorem is trivially true. 
Assume therefore that $\{g_n\}$ is bounded. There then exists a subsequence such that $g := \lim_n g_n$ exists. 
Furthermore, for each $n$, there exists
$\Gamma_n\in\calD_N$ such that $\Gamma_n \mapsto (\rho_n,\jp_n)$ and $\Tr (h\Gamma_n h) = \Tr (h^2\Gamma_n) \leq g + 1/n$.  
To see this, select for each $n$ a density matrix
$\Gamma_n \mapsto (\rho_n,\jp_n)$ that satisfies $\Tr (h^2 \Gamma_n ) < g_n + 1/2n$ and
choose $m$ such that $\vert g - g_n \vert\ < 1/2n$ for
each $n > m$ (by taking a subsequence if necessary); for each $n  > m$, we then have
\begin{align}
0 \leq \Tr (h \Gamma_n h) - g &= \vert \Tr (h \Gamma_n h) - g \vert \nonumber \\ \leq \vert \Tr (h \Gamma_n &h) - g_n \vert+ \vert g_n - g \vert \leq 1 /n.
\end{align}
Using the sequence $\{h \Gamma_n h\}$, we next establish a candidate limit density operator $\Gamma \in \mathcal{D}_N$.
  

The dual-space sequence of (positive semi-definite) operators $y_n := h \Gamma_n h \in \TC(L^2_N)$ 
is uniformly bounded in the trace norm: $\|y_n\|_\text{TC} \leq g + 1$.  
By the Banach--Alaoglu theorem, 
a norm-closed ball of finite radius in the dual space 
is compact in the weak-$*$ topology.  Thus, there exists $y\in \TC(L^2_N)$ such that, for a subsequence, $\Tr (B y_n ) \rightarrow \Tr (B y)$ for each $B\in
K(L^2_N)$, meaning that $y$ is the (possibly nonunique) weak-$*$ limit of a subsequence of $\{ y_n \}$. 
The limit is positive definite, since the orthogonal projector $P_\Phi$ onto $\Phi\in L^2_N$ is a compact operator, which gives
  \begin{equation}
    \begin{split}
  \braket{\Phi|y|\Phi} & = \Tr (y P_\Phi) = \lim_n \Tr (y_n P_\Phi) \\
   &= \lim_n \braket{\Phi|y_n|\Phi} \geq 0.        
    \end{split}
\end{equation}

  

We now define $\Gamma = h^{-1}y h^{-1}$, which fulfills all the criteria for being an element of $\mathcal{D}_N$, except possibly $\Tr \Gamma = 1$, although $\Tr \Gamma \leq 1$ is already implied by the weak convergence. (Note that $\Gamma$ has finite kinetic energy since $\Tr (h^2 \Gamma) <+\infty$.) If we can show that $\Gamma \mapsto (\rho,\jp)$, then we are
done with the complete proof, since $\Gamma \in \mathcal D_N$ follows from
$\Tr\Gamma = N^{-1} \int_{\mathbb R^3} \rho(\mathbf r)\,\mathrm d \mathbf r = 1$ and since 
  \begin{equation}
    \begin{split}
        \Tr(h^2 \Gamma) &= \Tr y \leq \liminf_n \Tr y_n  \\
  &= \liminf_n \Tr (h^2\Gamma_n)  \\ &\leq \liminf_n \left\{F(\rho_n,\jp_n) + 1 /2n\right\}  \\
  &= \liminf_n F(\rho_n,\jp_n) .
    \end{split}
  \end{equation}  

 Let $(\rho',\jp') \mapsfrom \Gamma$ be the density associated with $\Gamma$. To demonstrate that $(\rho',\jp')=(\rho,\jp)$, we recall that $(\rho_n,\mathbf{j}_{\text{p}n}) \rightharpoonup (\rho,\jp)$ by assumption. Since weak limits are unique, our proof is complete if we can 
 show that $(\rho_n,\mathbf{j}_{\text{p}n})\rightharpoonup (\rho',\jp')$ in $L^1(\mathbb{R}^3)\times\mathbf{L}^1(\mathbb{R}^3)$.
 The proof of  $\rho_n \rightharpoonup \rho'$ is given in Ref.\,\onlinecite{Lieb1983} and omitted here. 
 We here demonstrate that $\jp\rightharpoonup \jp'$ by showing that $(\mathbf{j}_{\text{p}n} - \jp'\,\vert\, \vec{a}) \to  0$ for each $\vec{a}\in \vec{L}^\infty(\RR^3)$.
 


Let $\Omega\subset\mathbb{R}^3$ be a bounded domain with characteristic function $\chi$, equal to 1 on $\Omega$ and 0 elsewhere. 
Since $\rho,\rho'\in L^1(\mathbb{R}^3)$, we may, for a given $\varepsilon>0$, choose $\Omega$ sufficiently large so that  $\int (1-\chi) \rho\,\rmd \mathbf r  < \varepsilon$ 
and $ \int (1-\chi) \rho' \,\rmd \mathbf r< \varepsilon$. Since $\rho_n \rightharpoonup \rho$, we also have 
$\int (1-\chi)(\rho_n-\rho) \,\rmd \vec r + \varepsilon$ for sufficiently large $n$.
  From the triangle inequality, we obtain
  $\int (1-\chi) \rho_n \,\rmd \vec r \leq \int \!(1-\chi)(\rho_n-\rho) \,\rmd \vec r + \int (1-\chi) \rho' \,\rmd \mathbf r$, implying that
  $ \int (1-\chi) \rho_n \,\rmd \mathbf r < 2\varepsilon$ for sufficiently larger $n$.
  
In the notation $\tau = (\mathbf{r}_1,\tau_{-1}) = (x_1,x_2,\cdots,x_N)$ and $\tau_{-1} = (\sigma_1,x_2,\cdots,x_N)$ with space--spin coordinates $x_i = (\mathbf{r}_i,\sigma_i)$, let
$U_\alpha = N\Im \diag \partial_{1\alpha} \Gamma 
= N\Im \sum_\mu \lambda_\mu \overline{\psi_\mu(\tau)} \partial_{1\alpha} \psi_\mu(\mathbf{r}_1,\tau_{-1})$,
where $\alpha$ denotes a Cartesian component and
where we have introduced the spectral decomposition $\Gamma=\sum_\mu \lambda_\mu \ket{\psi_\mu}\bra{\psi_\mu} \in \mathcal D_N$ with $\psi_\mu \in H^1_N$.  We note that, if $\Gamma \mapsto (\rho_\Gamma,\mathbf{j}_{\text{p}\Gamma})$, then integration of $U_\alpha$ over $\tau_{-1}$ gives  $j_{\text{p}\alpha\Gamma}(\mathbf{r}) =  \int U_\alpha(\mathbf{r},\tau_{-1})\, \rmd \tau_{-1}$. 


  Let now $S = \prod_{i=1}^N\chi(\mathbf{r}_i)$ be the characteristic function of $\Omega^N \subset \mathbb{R}^{3N}$.  By the definition of $U_\alpha$, we then have
\begin{align*}
I(U_\alpha)&:=\left\vert  \int (1-S) U_\alpha  \, \rmd \tau \right\vert \\
&\leq
N \!\! \int (1-S) \sum_\mu \lambda_\mu \vert  \psi_\mu  \vert    \vert \partial_{1,\alpha} \psi_\mu \vert \, \rmd \tau.
\end{align*}
Applying the Cauchy--Schwarz inequality twice, we obtain
\begin{align*}
I(U_\alpha)&\leq N \int (1-S)  \left(   \sum\nolimits_\mu  \lambda_\mu\vert  \psi_\mu  \vert^2\right)^{1/2}   \\ &\qquad\qquad\times
 \left(   \sum\nolimits_\mu \lambda_\mu  \vert \partial_{1,\alpha} \psi_\mu \vert^2 \right)^{1/2} \, \rmd \tau 
\\
&\leq( 2N)^{1/2} \left(    \int (1-S)        \sum\nolimits_\mu \lambda_\mu  \vert  \psi_\mu  \vert^2  \, \rmd \tau      \right)^{1/2}\!\!\!\!\!\!\! 
 \\ &\qquad\qquad\times\left( \frac{N}{2}\!\!  \int   \sum\nolimits_\mu \lambda_\mu  \vert \partial_{1,\alpha} \psi_\mu \vert^2 \, \rmd \tau \right)^{1/2}.
\end{align*}
Noting that $1-S \leq \sum_{i=1}^N [1-\chi(\mathbf{r}_i)]$ and using the symmetry of $|\psi_\mu|^2$, we obtain for the two factors
\begin{align*}
&\int (1-S)  \sum_\mu \lambda_\mu  \vert  \psi_\mu  \vert^2 \,  \rmd \tau  \leq  \int (1-\chi)\rho' \, \rmd \vec{r} < \varepsilon, \\
&\frac{N}{2}\!\! \int   \sum_\mu \lambda_\mu  \vert \partial_\alpha \psi_\mu \vert^2 \, \rmd \tau  = \Tr(T\Gamma)  \leq g .
\end{align*}
We conclude that $I(U_\alpha)^2 \leq 2N g \varepsilon$. Introducing 
$      U_{n,\alpha} =N\Im \diag \partial_{1\alpha} \Gamma_n$
and proceeding in the same manner, we arrive at the bound $I(U_{n,\alpha})^2 \leq 4N g \varepsilon$, assuming that $n$ has been chosen so large that $ \int (1-\chi) \rho_n \,\rmd \mathbf r < 2\varepsilon$ holds.

We are now ready to consider the weak convergence $\mathbf{j}_{\text{p}n} \rightharpoonup \jp'$ in $\mathbf{L}^1(\mathbb{R}^3)$. For each  $\mathbf{a} \in \mathbf{L}^\infty(\mathbb{R}^3)$ and for sufficiently larger $n$, we obtain, using the Cauchy--Schwarz inequality and the H\"older inequality in combination with
the bounds $I(U_\alpha)^2 \leq 2N g \varepsilon$ and $I(U_{n,\alpha})^2 \leq 4N g \varepsilon$, the inequality
\begin{widetext}
\begin{equation}
    \begin{split}
        \left| \int (\mathbf{j}_{\text{p}n} - \jp')\cdot \mathbf{a} \,  \rmd \vec{r} \right| &\leq \sum_\alpha \left|\int
        (j_{\text{p}n\alpha}-j_{\text{p}\alpha}') a_\alpha \, \rmd \vec{r} \right|  = \sum_\alpha \left| \int (U_{n\alpha} - U_\alpha) a_\alpha(\mathbf{r}_1) \, \rmd \tau \right| \\
        & \leq  \sum_\alpha \left| \int (1-S) (U_{n\alpha} - U_\alpha) a_\alpha(\mathbf{r}_1)  \, \rmd \tau \right| + \sum_\alpha \left| \int S (U_{n\alpha} - U_\alpha) a_\alpha(\mathbf{r}_1) \, \rmd \tau \right| \\ 
        & \leq \sum_\alpha \|a_\alpha\|_\infty (6Ng\varepsilon)^{1/2} + \sum_\alpha \left| \int (U_{n\alpha} - U_\alpha) a_\alpha(\mathbf{r}_1) S\, \rmd \tau \right| .
    \end{split}
\end{equation}
\end{widetext}
Since $\varepsilon>0$ is arbitrary, it only remains to show
$\int (U_{n\alpha} - U_\alpha) a_\alpha(\mathbf{r}_1) S \, \rmd  \tau \to 0$
as $n \to \infty$.  

Let $M$ be the compact multiplication operator associated with $a_\alpha(\mathbf{r}_1)S(\tau)$, a bounded function with compact support over $\mathbb{R}^{3N}$. Let $\Omega_\sigma = \{ \uparrow, \downarrow\}$ be the set consisting of the two spin states of the electrons. 
  We note that 
\begin{equation}
      \begin{split}
          \int U_{n\alpha}a_\alpha S \, \rmd \tau &=\int_{(\Omega\times\Omega_\sigma)^N} U_{n\alpha} a(\mathbf{r}_1) \, \rmd \tau \\
          &= N \Im \Tr (\partial_{1\alpha}\Gamma_n M) \\&= N \Im \Tr (h^{-1} M \partial_{1\alpha} h^{-1} y_n),
      \end{split}
  \end{equation}
 viewing $\Gamma_n$ as an  operator over $L^2((\Omega\times\Omega_\sigma)^N)$ by domain restriction of the spectral decomposition elements---that is, $\psi_\mu \in H^1((\Omega\times\Omega_\sigma)^N)$, meaning that the $2^N$ spin components of $\psi_\mu$ are in $H^1(\Omega^N)$. The spaces used here are not antisymmetrized, for simplicity.
  
 Our next task is to demonstrate that $B = h^{-1} M \partial_{1\alpha} h^{-1}$ is compact over $L^2((\Omega\times\Omega_\sigma)^N)$. We first show that $h^{-1}$ is compact  with range $H^1((\Omega\times\Omega_\sigma)^N)$. We have $h = \sqrt{T+W + 1}$ with domain $H^1_N((\Omega\times\Omega_\sigma)^N)$. Now $h^{-1}$ exists and is bounded since $-1$ is not in the spectrum of $T+W$---that is, $h^{-1} \colon L^2((\Omega\times\Omega_\sigma)^N) \to H^1((\Omega\times\Omega_\sigma)^N)$ is bounded. By the Rellich--Kondrachov theorem, $H^1(\Omega^N)$ (the standard Sobolev space without spin) is a compact subset of $L^2(\Omega^N)$. It follows that $H^1((\Omega\times\Omega_\sigma)^N)$ is a compact subset of $L^2((\Omega\times\Omega_\sigma)^N)$, since the tensor product of compact sets is compact. Hence, $h^{-1}$ is compact.
  
  Next, the operator $\partial_{1\alpha}$ is, by the definition of the Sobolev space $H^1(\Omega^N)$,  bounded from $H^1((\Omega\times\Omega_\sigma)^N)$ to $L^2((\Omega\times\Omega_\sigma)^N)$. Thus $\partial_{1\alpha} h^{-1}$ is bounded over $L^2((\Omega\times\Omega_\sigma)^N)$. It follows that $B \in K(L^2((\Omega\times\Omega_\sigma)^N))$ because it is a product of a compact operator $h^{-1}$ with a bounded operator $M\partial_{1\alpha}h^{-1}$.
  
From compactness of $B$, it follows that 
\begin{equation}
      \begin{split}
          \int U_{n\alpha} a_\alpha S \, \rmd \tau = N \Im &\Tr (B y_n) \\
          \to N \Im \Tr (B y) &= \int U_{\alpha} a_\alpha S \, \rmd \tau,
      \end{split}
  \end{equation}
by the weak-$*$ convergence of $y_n$ to $y$. We conclude that $\mathbf{j}_{\text{p}n}
  \rightharpoonup \jp'$ and hence that $(\rho,\jp)=(\rho',\jp')$, completing the proof.
\end{proof}
\begin{corollary}\label{cor}
  $F \colon L^1(\mathbb{R}^3)\times\vec{L}^1(\mathbb{R}^3) \to [0,+\infty]$ is lower semi-continuous and also weakly lower semi-continuous. 
\end{corollary}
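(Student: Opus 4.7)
The plan is to reduce the corollary to Theorem~\ref{theorem:CDFT-wlsc-new} via a standard subsequence argument. Since strong convergence in $L^1(\mathbb{R}^3)\times \vec{L}^1(\mathbb{R}^3)$ implies weak convergence---every bounded linear functional is norm-continuous, so $\|x_n-x\|_1\to 0$ gives $|\omega(x_n-x)|\le \|\omega\|_\infty \|x_n-x\|_1\to 0$ for every $\omega\in L^\infty$---weak lower semi-continuity automatically yields ordinary (strong) lower semi-continuity. It therefore suffices to establish the weak version.

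Suppose $(\rho_n,\jp_n) \rightharpoonup (\rho,\jp)$ in $X$, and set $L := \liminf_n F(\rho_n,\jp_n) \in [0,+\infty]$. If $L = +\infty$, the inequality $F(\rho,\jp) \leq L$ is vacuous, so I may assume $L < +\infty$. I would then extract a subsequence, still indexed by $n$, along which $F(\rho_n,\jp_n) \to L$ with each term finite; weak convergence is inherited by this subsequence since it is merely passage to a subsequence of convergent scalar sequences $\omega(\rho_n,\jp_n)$. Theorem~\ref{theorem:CDFT-wlsc-new} then produces a density matrix $\Gamma \in \calD_N$ with $\Gamma \mapsto (\rho,\jp)$ and $\Tr(H_0 \Gamma) \leq L$. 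Because $\Gamma$ is an admissible trial operator in the constrained-search infimum defining $F(\rho,\jp)$ in Eq.~\eqref{eq:cs}, we obtain $F(\rho,\jp) \leq \Tr(H_0 \Gamma) \leq L$, which is exactly weak lower semi-continuity.

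The only subtlety requiring comment is that Theorem~\ref{theorem:CDFT-wlsc-new} is nominally stated with $F(\rho,\jp) < +\infty$ among its hypotheses, whereas our proof aims precisely at this conclusion. Inspecting the proof, however, this hypothesis is used only in the trivial subcase where $F(\rho_n,\jp_n)\to+\infty$; there it merely guarantees that the conclusion (existence of a preimage $\Gamma$ of $(\rho,\jp)$) is nonvacuous. In the nontrivial subcase treated here the entire weak-$*$ compactness and Rellich--Kondrachov construction is driven by the bounded sequence $\{g_n\}$ and never invokes $F(\rho,\jp) < +\infty$ directly; the bound $F(\rho,\jp) \leq L < +\infty$ emerges \emph{a posteriori} from the construction, so no circularity arises. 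I expect this bookkeeping point, rather than any substantive analytical step, to be the only thing needing an explicit flag, the heavy lifting having already been done in the theorem.
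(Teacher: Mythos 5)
Your proof is correct and follows the same basic strategy as the paper: reduce the corollary to Theorem~\ref{theorem:CDFT-wlsc-new} and note that weak lower semi-continuity implies the strong version. The paper's own proof is terser---it simply applies the theorem to the full sequence and cites Mazur's Lemma for the passage from weak to strong lower semi-continuity---and your version differs in two worthwhile ways. First, for the weak-to-strong step you use the elementary implication (norm convergence implies weak convergence, so a weakly sequentially l.s.c.\ function is automatically norm l.s.c.), which is the logically correct direction and needs neither convexity nor Mazur's Lemma; Mazur's Lemma is the tool for the \emph{converse} implication (strong l.s.c.\ plus convexity gives weak l.s.c.), so the paper's citation is, strictly speaking, aimed the wrong way even though its conclusion is fine. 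Second, you explicitly handle the bookkeeping that the paper glosses over: the case $\liminf_n F(\rho_n,\jp_n)=+\infty$, the extraction of a subsequence along which the values converge to the $\liminf$ and are eventually finite, and---most importantly---the apparent circularity in invoking a theorem whose hypotheses include $F(\rho,\jp)<+\infty$ in order to prove an upper bound on $F(\rho,\jp)$. Your resolution is accurate: in the nontrivial case the construction in the theorem's proof manufactures a $\Gamma\in\calD_N$ with $\Gamma\mapsto(\rho,\jp)$ from the bounded sequence $\{g_n\}$ alone, so finiteness of $F(\rho,\jp)$ is a conclusion rather than an input there; the hypothesis is only needed to make the degenerate case $g_n\to+\infty$ nonvacuous, which is irrelevant to the corollary. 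None of this changes the substance of the argument, but your write-up is the more rigorous of the two.
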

\begin{proof}
Let $(\rho_n,\jp_n) \rightharpoonup (\rho,\jp) \in L^1(\mathbb{R}^3)\times \mathbf{L}^1(\mathbb{R}^3)$. From Theorem~\ref{theorem:CDFT-wlsc-new}, we then obtain
\begin{equation}
    F(\rho,\jp) \leq \Tr(H_0\Gamma) \leq \liminf_n F(\rho_n,\jp_n),
\end{equation}
where $\Gamma \mapsto (\rho,\jp)$. Hence, $F$ is weakly lower semi-continuous. By Mazur's Lemma~\cite{EkelandTemam}, weak lower semi-continuity of a convex function implies strong lower semi-continuity.
\end{proof}
\begin{corollary}\label{cor2}
    If $F(\rho,\jp)<+\infty$, then
    the infimum in the CDFT constrained-search functional is a minimum:
\begin{equation}
    F(\rho,\jp) = \min_{\Gamma\mapsto(\rho,\jp)} \Tr(\Gamma H_0).
\end{equation}
 \end{corollary}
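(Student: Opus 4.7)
The plan is to apply Theorem~\ref{theorem:CDFT-wlsc-new} to the \emph{constant} sequence $(\rho_n,\jp_n) = (\rho,\jp)$ for every $n \in \mathbb N$. Such a sequence trivially converges weakly to $(\rho,\jp)$ in $X = L^1(\mathbb R^3) \times \mathbf L^1(\mathbb R^3)$, and the hypothesis $F(\rho_n,\jp_n) = F(\rho,\jp) < +\infty$ is exactly the assumption of the corollary. The theorem then delivers some $\Gamma \in \calD_N$ with $\Gamma \mapsto (\rho,\jp)$ satisfying
\begin{equation*}
    \Tr(H_0 \Gamma) \leq \liminf_n F(\rho_n,\jp_n) = F(\rho,\jp).
\end{equation*}

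The reverse inequality is immediate from the definition of $F$ as an infimum: any $\Gamma' \in \calD_N$ with $\Gamma' \mapsto (\rho,\jp)$ obeys $\Tr(H_0 \Gamma') \geq F(\rho,\jp)$, and in particular the $\Gamma$ produced above. Combining the two inequalities yields $F(\rho,\jp) = \Tr(H_0 \Gamma)$, so the infimum is attained at $\Gamma$ and can be written as a minimum.

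Since the substantive work—extracting a weak-$*$ convergent subsequence from a bounded family $\{h\Gamma_n h\}$ in $\TC(L^2_N)$ via Banach--Alaoglu, verifying that the limit $y$ yields an element $\Gamma = h^{-1} y h^{-1} \in \calD_N$, and showing that $\Gamma$ carries the prescribed density--current pair—has already been carried out in Theorem~\ref{theorem:CDFT-wlsc-new}, I do not foresee any fresh obstacle in the corollary itself. It is essentially a packaging of the expectation-valuedness proved in the theorem, applied to the trivial weak limit of a minimizing sequence. The only fine point is to note that the theorem supplies \emph{some} subsequential limit with the infimum bound, and that this single limit $\Gamma$ already suffices to upgrade infimum to minimum; no further extraction or diagonalization is required.
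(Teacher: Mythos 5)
Your proposal is correct and coincides with the paper's own proof: both apply Theorem~\ref{theorem:CDFT-wlsc-new} to the constant sequence $(\rho_n,\jp_n)=(\rho,\jp)$ and combine the resulting bound $\Tr(H_0\Gamma)\leq F(\rho,\jp)$ with the trivial reverse inequality from the infimum definition. You merely spell out the reverse inequality that the paper leaves implicit.
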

\begin{proof}
    Simply take $(\rho_n,\jp_n) = (\rho,\jp)$ for all $n$, and apply Theorem~\ref{theorem:CDFT-wlsc-new}.
\end{proof}

\section{Conclusion}

We have extended Theorem~4.4 of Ref.\,\onlinecite{Lieb1983} to CDFT. As immediate corollaries, the constrained-search functional $F(\rho,\jp)$ is lower semi-continuous and expectation valued, that is, if $F(\rho,\jp)<+\infty$, then there exists a $\Gamma \mapsto (\rho,\jp)$ such that $F(\rho,\jp) = \Tr (H_0 \Gamma)$. These mathematical results are the final pieces in the puzzle of placing CDFT on a solid mathematical ground in a similar manner as done by Lieb for standard  DFT.

\section{Acknowledgments}
This work has received funding from the Research Council of Norway (RCN) under
CoE Grant Nos.~287906 and 262695 (Hylleraas Centre for Quantum Molecular Sciences) and from ERC-STG-2014 under grant agreement No 639508. 

\bibliographystyle{unsrt}
\bibliography{refs.bib}

\end{document}